\documentclass{llncs}
\usepackage{amsmath,amsfonts}
\usepackage{ae,aecompl}
\usepackage{enumerate}
\usepackage{graphicx}
\usepackage{color}
\usepackage{verbatim}


\newcommand{\COMMENTED}[1]{}

\newcommand{\ADV}{\textsc{Adv}}

\newcommand{\E}{\mathbf{E}}


\newcommand{\RMIX}{\textsc{RMix}}
\newcommand{\e}{\mathrm{e}}

\title{Randomised Buffer Management with Bounded Delay against Adaptive Adversary}

\author{{\L}ukasz Je{\.z}\inst{1}}

\institute{
Institute of Computer Science,
University of Wroc{\l}aw,
50-383 Wroc{\l}aw, Poland.
}

\begin{document}

\maketitle

\section{Introduction}

We study the Buffer Management with Bounded Delay problem, introduced by Kesselman~et~al.~\cite{DBLP:journals/siamcomp/KesselmanLMPSS04},
or, in the standard scheduling terminology, the problem of online scheduling of unit jobs to maximise weighted throughput. The adaptive-online
adversary model for this problem has recently been studied by Bie{\'n}kowski~et~al.~\cite{DBLP:conf/waoa/BienkowskiCJ08}, who proved a lower
bound of \(\frac{4}{3}\) on the competitive ratio and provided a matching upper bound for \(2\)-bounded sequences. In particular, the authors
of~\cite{DBLP:conf/waoa/BienkowskiCJ08} claim that the algorithm $\RMIX$~\cite{DBLP:journals/jda/ChinCFJST06} is \(\frac{\e}{\e-1}\)-competitive
against an adaptive-online adversary. However, the original proof of Chin~et~al.~\cite{DBLP:journals/jda/ChinCFJST06} holds only in the oblivious
adversary model. The reason is as follows. First, the potential function used in the proof depends on the adversary's future schedule, and second,
it assumes that the adversary follows the earliest-deadline-first policy. Both of these cannot be assumed in adaptive-online adversary model,
as the whole schedule of such adversary depends on the random choices of the algorithm. We give an alternative proof that {\RMIX} indeed is
\(\frac{\e}{\e-1}\)-competitive against an adaptive-online adversary.

Similar claim about {\RMIX} was made in another paper by Bie{\'n}kowski~et~al.~\cite{DBLP:conf/soda/BienkowskiCDHJJS09} studying a slightly
more general problem. It assumes that the algorithm does not know exact deadlines of the packets, and instead knows only the order of their
expirations. However, any prefix of the deadline-ordered sequence of packets can expire in every step. The new proof that we provide holds even
in this more general model, as both the algorithm and its analysis rely only on the relative order of packets' deadlines.

\section{{\RMIX} and its new analysis}

The algorithm $\RMIX$ works as follows.
In each step, let $h$ be the heaviest pending job. 
Select a real $x \in [-1,0]$ uniformly at random. Transmit $f$, the earliest-deadline 
pending packet with $w_f \geq \e^x \cdot w_h$. 

We write $a \lhd b$ ($a \unlhd b$) to denote that the deadline of packet $a$ is earlier 
(not later) that the deadline of packet $b$. This is consistent with the convention
of~\cite{DBLP:conf/soda/BienkowskiCDHJJS09} for the more general problem studied therein.

\begin{theorem}
\RMIX is $\e / (\e-1)$-competitive against an adaptive-online adversary.
\end{theorem}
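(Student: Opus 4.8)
The plan is to run an amortised analysis with a potential $\Phi$ that depends only on the \emph{current} contents of the two buffers and only on the relative order $\lhd$ of deadlines, thereby avoiding both defects of the original proof: dependence on the adversary's future schedule, and the earliest-deadline-first assumption. Fix an arbitrary adaptive-online adversary and condition on the whole history of $\RMIX$'s random bits before the current step. Given this history the adversary's arrivals and the single packet $j$ it transmits in this step are \emph{determined}, whereas $\RMIX$'s transmitted packet $f$ still depends only on the fresh variable $x$ uniform on $[-1,0]$; this conditioning is exactly what neutralises adaptivity. Writing $\BALG,\BADV$ for the history-dependent buffers and $\Phi=\Phi(\BALG,\BADV)$ with $\Phi=0$ on empty configurations, it suffices to prove, pointwise in the history, the single-step inequality
\begin{equation*}
\tfrac{\e}{\e-1}\,\E_x[w_f] \;\ge\; w_j + \E_x[\Delta\Phi].
\end{equation*}
Summing over steps and taking total expectation (tower property) then gives $\tfrac{\e}{\e-1}\,\E[\ALG]\ge\ADV$, since $\Phi$ begins and ends at $0$.

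The engine of the argument is a clean lower bound on $\RMIX$'s expected gain that uses the algorithm's definition directly: since $f$ is chosen with $w_f\ge\e^{x}w_h$,
\begin{equation*}
\E_x[w_f]\;=\;\int_{-1}^{0} w_f\,dx\;\ge\;\int_{-1}^{0}\e^{x}w_h\,dx\;=\;\frac{\e-1}{\e}\,w_h,
\end{equation*}
so the constant $\tfrac{\e-1}{\e}$ is precisely $\int_{-1}^{0}\e^{x}\,dx$. For $\Phi$ I would take $\tfrac{1}{\e-1}$ (or a constant to be tuned, which I expect to lie in $[1,\tfrac{\e}{\e-1}]$) times the weight of a deadline-respecting matching of $\BADV$ into $\BALG$ that measures the adversary's \emph{feasibly realisable} lead: the heavy packets the adversary still holds but $\RMIX$ has already sent, counted only to the extent the order $\lhd$ actually lets the adversary cash them. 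Because this is a function of the present buffers and of $\lhd$ alone, it is legitimate both against the adaptive adversary and in the more general deadline-ordered model.

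I would then split a step into arrivals, the paired transmission event, and expiry. Arrivals add identical packets to both buffers and expiry removes identical packets, so by construction the matching-based $\Phi$ moves only in the safe direction in these two sub-events; these are the routine cases. The crucial point is that $\RMIX$'s transmission of $f$ and the adversary's transmission of $j$ must be analysed \emph{together} in one step, never separately: it is the immediate gain $\E_x[w_f]\ge\tfrac{\e-1}{\e}w_h$ that pays for $w_j$ when $w_j\le w_h$, while the stored credit in $\Phi$ pays for $j$ when $j$ is a heavy lead packet with $w_j>w_h$. Removing $f$ may enlarge the adversary's lead (if $f\in\BADV$), raising $\Phi$; this cost is charged against the same $\E_x[w_f]$, and the deadline-respecting nature of the matching is what keeps the charge affordable.

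The main obstacle is designing the matching and coupling it with $\RMIX$'s random removal so that the single-step inequality holds for \emph{every} adversary policy and every configuration simultaneously. The naive choice ``weight of $\BADV\setminus\BALG$'' fails: when the adversary transmits a light shared packet while $\RMIX$ happens to remove a packet the adversary still holds, the budget $\tfrac{\e}{\e-1}\,\E_x[w_f]$ is exhausted, so the matching must discount lead packets the adversary cannot feasibly realise under $\lhd$. Verifying that the randomly chosen $f$ releases, in expectation, exactly enough matched credit regardless of how the adversary has arranged $\BADV$ is the delicate step, and it is precisely here that the original proof's reliance on a fixed earliest-deadline-first adversary and on the future schedule becomes unavailable. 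Once this coupling is established, arrivals, expiry, and the summation are routine and yield the claimed ratio $\tfrac{\e}{\e-1}$.
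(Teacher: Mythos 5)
Your proposal is a plan, not a proof: the one step you yourself identify as ``the delicate step'' --- defining the deadline-respecting matching that underlies $\Phi$ and verifying the single-step inequality $\tfrac{\e}{\e-1}\E_x[w_f] \ge w_j + \E_x[\Delta\Phi]$ for every configuration --- is exactly the content of the theorem, and it is left unconstructed. What you do establish is sound but easy: conditioning on the history so that $j$ is determined while $f$ is still random (this is indeed the right way to neutralise adaptivity), and the bound $\E_x[w_f] \ge \int_{-1}^0 \e^x w_h\,dx = \tfrac{\e-1}{\e}w_h$. But that bound alone cannot close the argument, as you concede when you note that the naive potential ``weight of $\BADV\setminus\BALG$'' fails and that some discounted matching must ``release, in expectation, exactly enough matched credit regardless of how the adversary has arranged'' its buffer. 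No candidate $\Phi$ is written down, no case analysis of the transmission event is carried out, and there is no evidence the tuning constant you postulate in $[1,\tfrac{\e}{\e-1}]$ exists. As it stands, the proposal would be judged incomplete.

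It is worth noting that the paper's proof avoids this obstacle entirely rather than solving it. Instead of a potential on diverging buffers, it uses the buffer-modification paradigm of Li et al.: after each step the adversary's buffer is forcibly made identical to $\RMIX$'s, at the price of crediting the adversary an \emph{amortized} gain. Concretely, with the adversary assumed greedy (for every pending $k$, either $w_j \ge w_k$ or $j \unlhd k$), there are only two cases: if $f \lhd j$, then $w_j \ge w_f$ and one replaces $f$ by $j$ in the adversary's buffer at no extra credit; if $j \unlhd f$, the adversary is additionally credited $w_f$ and $j$ is reinserted. Since $j \unlhd f$ forces $w_f \ge w_j$ by the algorithm's definition, the adversary's expected amortized gain is $w_j + \int_y^0 w_f\,dx$ with $y = \ln(w_j/w_h)$, and a short computation bounding $w_f \ge \e^x w_h$ in both numerator and denominator gives the ratio $1 - 1/\e$ exactly. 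Because buffers never diverge, the matching whose design blocks your approach never needs to exist. If you want to salvage your route, the lesson from the paper is that the ``lead'' should be cancelled transactionally at the end of each step rather than stored in a potential across steps.
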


\begin{proof}
We use the paradigm of modifying the adversary's buffer used in the paper of Li et al.~\cite{DBLP:conf/soda/LiSS05}.
Namely, in each time step we assume that $\RMIX$ and the adversary $\ADV$ have the same buffers. 
Both $\RMIX$ and $\ADV$ transmit a packet. If after doing so, the contents of their buffers become 
different, we modify the adversary's buffer to make it identical with that of $\RMIX$.
To do so, we may have to let the adversary transmit another packet and keep the one originally transmitted in the buffer,
or upgrade one of the packets in its buffer by increasing its weight and deadline. We show that in each step 
the expected gain of $\RMIX$ is at least $\frac{\e-1}{\e}$ times the expected {\em amortized gain} of the adversary, 
denoted $\ADV'$. The latter is defined as the sum of weights of the packets that $\ADV$ eventually transmitted in the step.
Both expected values are taken over possible random choices of $\RMIX$.

First, we compute the expected gain of $\RMIX$ in a single step. 
\[
	\E[\RMIX] = \E[w_f] = \int_{-1}^0 w_f \; dx 
	\enspace.
\]

Assume now that $\ADV$ transmits a packet $j$.
Without loss of generality, we may assume that for each packet $k$ from the buffer, either $w_j \geq w_k$ or 
$j \unlhd k$. We call it a {\em greediness property}.
We consider two cases. 
\begin{enumerate}
\item $f \lhd j$. By the greediness property, $w_j \geq w_f$. After both $\ADV$ and $\RMIX$ transmit their 
packets, we replace $f$ in the buffer of $\ADV$ by $j$.
\item $j \unlhd f$. After both $\ADV$ and $\RMIX$ transmit their packets, we let $\ADV$ transmit additionally 
$f$ in this round and we reinsert $j$ into its buffer.
\end{enumerate}
Therefore the amortized gain of $\ADV$ is $w_j$ and additionally $w_f$ if $j \unlhd f$.  
By the definition of the algorithm, $j \unlhd f$ only if $w_f \geq w_j$. 
Let $y = \ln (w_j/w_h)$. Then,
\[
	\E[\ADV'] = w_j + \E[w_f | w_f \geq w_j] =
 		w_j + \int_{y}^0 w_f \; dx
	\enspace.
\]
Finally, we compare the gains, obtaining
\begin{align*}
\frac{\E[\RMIX]}{\E[\ADV']} \;= &\; 
	\frac{\int_{-1}^y w_f \; dx + \int_y^0 w_f \; dx}{w_j + \int_{y}^0 w_f \; dx} \geq 
	\frac{\int_{-1}^y \e^x w_h \; dx + \int_y^0 \e^x w_h \; dx}{w_j + \int_{y}^0 \e^x w_h \; dx} \\
	= &\; \frac{\int_{-1}^0 \e^x w_h \; dx}{w_j + \int_{y}^0 \e^x w_h \; dx} 
		= \frac{w_h \cdot (1-1/\e)}{w_j + w_h \cdot (1-w_j/w_h)} \\
	= &\; 1 - 1/\e 
\enspace,
\end{align*}
which concludes the proof.
\qed
\end{proof}

\bibliographystyle{abbrv}
\bibliography{online}

\end{document}